\newcommand{\remove}[1]{#1}
\newcommand{\citeN}[1]{\cite{#1}}
\newtheorem{theorem}{Theorem}
\newtheorem{example}{Example}
\newtheorem{definition}{Definition}
\newtheorem{claim}{Claim}
\newtheorem{lemma}{Lemma}
\newtheorem{obs}{Observation}
\newtheorem{note}{Note}
\renewcommand{\epsilon}{\varepsilon}
\def\calD{{\cal D}}
\def\opt{\mbox{\rm opt}}
\def\re{\mathbb{R}}
\def\expectation{\mbox{\rm \bf E}}
\def\variance{\mbox{\rm \bf Var}}
\def\argmax{{\mbox{\rm argmax}}}
\def\out{{\mbox{\rm out}}}
\def\inf{{\mbox{\rm inf}}}
\begin{document}


\title{Privacy-Aware Mechanism Design}

\author{Kobbi Nissim\thanks{Dept.\ of Computer Science, Ben-Gurion University of the Negev, Be'er Sheva, Israel. {\tt kobbi@cs.bgu.ac.il}. Work partly done while the author was visiting the cryptography group at Bar-Ilan University.}  \and  Claudio Orlandi\thanks{Dept.\ of Computer Science, Bar-Ilan University, Ramat Gan, Israel. {\tt claudio.orlandi@cs.biu.ac.il}} \and 
Rann Smorodinsky\thanks{Faculty of Industrial Engineering and Management, Technion -- Israel Institute of Technology, Haifa 32000, Israel. {\tt rann@ie.technion.ac.il}.}} 

\maketitle

\begin{abstract}

Mechanism design deals with distributed algorithms that are executed with self-interested agents. The designer's, whose objective is to optimize some function of the agents private {\em types}, needs to construct a computation that takes into account agent incentives which are not necessarily in alignment with the objective of the mechanism. 
Traditionally, mechanisms are designed for agents who only care about the utility they derive from the mechanism outcome. This outcome often fully or partially discloses agent declare types. Such mechanisms may become inadequate when agents are privacy-aware, i.e., when their loss of privacy adversely affects their utility. In such cases ignoring privacy-awareness in the design of a mechanism may render it not incentive compatible, and hence inefficient. Interestingly, and somewhat counter-intuitively, Xiao [eprint 2011] has recently showed that this can happen even when the mechanism preserves a strong notion of privacy.
Towards constructing mechanisms for privacy-aware agents, we put forward and justify a model of privacy-aware mechanism design. We then show that privacy-aware mechanisms are feasible. The following is a summary of our contributions:
\begin{itemize}
\item {\bf Modeling privacy-aware agents:} We propose a new model of privacy-aware agents where agents need only have a conservative upper bound on how loss of privacy adversely affects their utility. This is in deviation from prior modeling which required full characterization.
\item {\bf Privacy of the privacy loss valuations:} Privacy valuations are often sensitive on their own. Our model of privacy-aware mechanisms takes into account the loss of utility due to information leaked about these valuations.
\item {\bf Guarantees for agents with high privacy valuations:} As it is impossible to guarantee incentive compatibility for agents that have arbitrarily high privacy valuations, we require a privacy-aware mechanism to set a threshold such that the mechanism is incentive compatible w.r.t.\ agents whose privacy valuations are below the threshold, and differential privacy is guaranteed for all other agents.
\item {\bf Constructing privacy-aware mechanisms:} We first construct a privacy-aware mechanism for a simple polling problem, and then give a more general result, based on recent generic construction of approximately additive mechanisms by Nissim, Smorodinsky, and Tennenholtz [ITCS 2012]. We show that under a mild assumption on the distribution of privacy valuations (namely, that valuations are bounded for all but a diminishing fraction of the population) these constructions are incentive compatible w.r.t.\ almost all agents, and hence give an approximation of the optimum. Finally, we show how to apply our generic construction to get a mechanism for privacy-aware selling of digital goods.
\end{itemize}

\end{abstract}







\medskip

\renewcommand{\thefootnote}{\arabic{footnote}}
\setcounter{footnote}{0}

\section{Introduction}

Mechanism design deals with distributed algorithms that are executed with self-motivated agents who optimize their own objective functions. The mechanism designer, interested in computing some function of the agents' private inputs (henceforth {\rm types}), needs hence to construct a computation that takes into account the agents' incentives, which are not necessarily in alignment with the goals of the designer. Settings where mechanism design is instrumental include centralized allocation of resources, pricing, the level of provision of a public good, etc. Traditionally, agents are modeled to care about the utility they derive from the outcome of the mechanism, but not about their privacy. Consequently, in many cases the outcome of the mechanism fully discloses the types declared by (some or all) agents.

We look at a model where agents also assign non-positive utility to the leakage of information about their private types through the public outcome of the mechanism.
This modeling is relevant, e.g., when private information is aggregated via markets which provide superior prediction power (e.g.,~\cite{WZ04}), kidney exchange markets where information aggregation and sharing lead to huge health-care benefits (e.g.,~\cite{ashlagi}), or recommendation engines which assist individuals in locating optimal products. Such markets may not be incentive compatible and consequently can fail if agents' privacy is not accounted for.

Our work is on the interface of the research in Algorithmic Game Theory and the recent theoretical research of privacy. Earlier scholarly work by \remove{McSherry and Talwar}~\citeN{MT07} has forged a link between the notion of {\em differential privacy}~\cite{DMNS06} and mechanism design. They observed that differential privacy can serve as a tool for constructing mechanisms where truthfulness is $\epsilon$-dominant. A recent work~\cite{NST12} has observed a few weaknesses in constructions resulting from~\cite{MT07} and resolved them by putting forward a general framework for constructing approximately-optimal mechanisms where truthfulness is a dominant strategy or an ex-post Nash equilibrium. This line of work demonstrates that differential privacy can serve as a powerful {\em tool} for the construction of efficient mechanisms.

The mechanisms presented in~\cite{MT07,NST12} were not analyzed with respect to agents who take into account their dis-utility due to the information leaked about their types. We call this dis-utility \emph{information utility} and we call {\em privacy-aware agents} those agents that take the information utility into account. It might be tempting to think that the combination of truthfulness and differential privacy is always sufficient for making privacy-aware agents truthful -- mechanisms that are truthful and preserve differential privacy should remain truthful also with respect to agents that take information utility into account. A work of \remove{Xiao}~\citeN{Xiao11} dispels this intuition by showing a mechanism that preserves differential privacy and is truthful with respect to agents that are not privacy aware, yet, under what seems to be a reasonable definition of information utility, truthfulness is not dominant with respect to privacy-aware agents. 

A recent work of \remove{Ghosh and Roth}~\citeN{GR11} constructs mechanisms that compensate agents for their loss in privacy. Ghosh and Roth consider a setting where a data analyst wishing to perform a differentially private computation of a statistic pays the participating agents for using their data. They construct mechanisms where agents declare how their loss of utility depends on the privacy parameter, and the mechanism decides upon which agents' information will be used in the computation and how much they will be paid. Interestingly, the mechanisms presented in~\cite{GR11} do not preserve the privacy of the loss valuations. However an agent value for privacy can reveal information about the agents' private data: it is not unreasonable to assume that there is some correlation between the price and agents sets on her privacy and the unlikelihood of her private data or, in other words, to assume that people value their privacy more if they have something to hide. 

In light of these issues, our goal is to construct mechanisms for privacy-aware agents, where privacy is accounted for the `traditional' inputs to the mechanism (such as valuations, locations, etc.) but also, and for the first time to the best of our knowledge, \emph{with respect to the privacy valuation itself}.

The results of~\cite{GR11} show, however, that this goal is too ambitious -- no individually rational mechanism can compensate individuals for the information (dis)utility incurred due to information leaked about the privacy valuation from the public output unless the privacy valuations are bounded. To overcome this obstacle we focus on mechanisms for large populations of agents: We propose a relaxation where loss in privacy is accounted for all agents whose valuations are bounded, where the bound increases as the agent population grows. Hence, in large enough populations truthfulness is provided for all (or most of) the agents. For the small fractions of agents who value their privacy too much for the mechanism to compensate,  we provide $\epsilon$-differential privacy with respect to whether their privacy valuations exceed the bound. The value of $\epsilon$ improves (i.e., reduces) with the population size.

\subsection{Our Contributions} 

\paragraph{Modeling} 

The main contribution of this work is a new notion of privacy-aware mechanism design where we examine critically previous modelings propose and justify a new model for privacy aware agents. 

We model privacy-aware agents to hold a `traditional' {\em game type} and a {\em privacy type}, where for the latter agents need only have a conservative upper bound on how loss of privacy adversely affects their utility. Agents care about leakage of information on both their game and privacy types. These features are in an important difference with respect to previous work (e.g.,~\cite{GR11}) where a full characterization of the information utility was required to achieve truthfulness, and furthermore, mechanisms did not take into account the information cost of the privacy type.

Note that if agents can have arbitrarily high privacy valuations, then it is impossible to a priori bound the information of a computation whose outcome depends on agents' private inputs, or, alternatively, on their  choice whether to participate or not (see also a more elaborate argument in~\cite{GR11} in the specific context of mechanisms for selling private information for statistical computations). To sidestep this inherent difficulty, we opt for a lesser requirement from a privacy-aware mechanism: the mechanism should set a threshold on the privacy valuation $v_{max}$ and a privacy parameter $\epsilon$ such that the mechanism is incentive compatible w.r.t.\ agents whose privacy valuations are below $v_{max}$ and $\epsilon$-differential privacy is guaranteed for all agents.

\paragraph{Construction of Privacy-Aware Mechanisms} 

We next demonstrate that privacy-aware mechanisms are feasible. Our first result illustrates some of our techniques: in Section~\ref{warmup}, we provide a simple privacy-aware poll between two or more alternatives. The main idea is to make (traditional) dis-utility due to mis-reporting dominate the information utility, and hence preserve truthfulness. We set a bound $v_{max}$ on the privacy valuations, and treat agents differently according to whether their valuations are above $v_{max}$ or not: for agents whose privacy valuations is below the bound, the mechanism ensures that the agents are provided with a fair reimbursement for their privacy loss. For agents whose privacy valuations are too high for the mechanism to compensate, we provide that their privacy valuations are protected in a $\epsilon$-differentially private way. As discussed above, this is in a sense the best we can hope to achieve. We then move our attention to large populations and we introduce the notion of \emph{admissible populations} by making a somewhat mild assumption on the distribution of the valuations (i.e., finiteness of its moments). 

In Section~\ref{sec:generic} we present a generic construction of privacy-aware mechanism. Our construction is based on the recent construction of~\cite{NST12}, where we modify the mechanism and its analysis to accommodate privacy-agents sampled from an admissible population. We show that the mechanism achieves truthfulness for most agents and non-trivial accuracy. Finally in Section~\ref{sec:mechanism digital goods} we present a natural example of a privacy-aware mechanism that falls in our framework i.e., privacy-aware selling of digital goods.

In a sense, our results show that when the outcome of a truthful (not necessarily privacy-aware) mechanism is insensitive to each of its individual inputs (as is often the case when the underlying population is large), it is rational for most privacy-aware agents to report truthfully. This is because the information leaked about their private types is small, and hence bounded away from the decrease in utility that can be caused by misreporting their type.

\subsection{Other Related Work}

The cryptographic literature also includes references to ``privacy preserving mechanism design'' (an example is \remove{Naor, Pinkas and Sumner}~\citeN{NPS99}). We stress that our goals are different from these cryptographic realizations of mechanisms as in our setting the agents are worried about what the public outcome of a mechanism may leak about their types and privacy valuations, whereas the goal of cryptographic realizations of mechanisms is to hide all information except for the outcome of the mechanism. As showed in~\cite{MNT09}, using cryptography to implement mechanism designs over an internet-like network is a non-trivial task, and one needs to make sure that the properties of the mechanism (e.g., truthfulness) are preserved also by the cryptographic implementation of the mechanism.

Independently from our work, \remove{Chen, Chong, Kash, Moran, and Vadhan}~\citeN{CCKMV} also studied the problem of truthful mechanisms in the presence of agents that value privacy. The motivation for both their work and ours is similar, and in both the quantification of privacy loss corresponds to the effect an agent's input has on the outcome of a mechanism. 
The model \cite{CCKMV} present for privacy-aware agents (and hence privacy-aware mechanisms) is different from ours in that in~\cite{CCKMV} agents are assumed to value privacy on a {\em per-outcome} basis, whereas our modeling utilizes a weaker assumption about the agents, i.e., that their privacy valuations depend on the {\em overall} (i.e., worst) outcome of the mechanism. Both modelings are well motivated, our reliance on a weaker assumption may lead to more robust mechanisms, where the per-outcome approach may lead to a richer set of privacy-aware mechanisms. 

\section{Preliminaries}

We refer to discrete sets $T$ and $S$ as the {\em type} set, and the  set of {\em social alternatives} respectively.
For two vectors $t,t'\in T^n$ we define the {\em Hamming distance} between $t$ and $t'$ as the number of entries on which $t,t'$ differ, i.e., $|\{i:t_i\not=t'_i\}|$. Vectors that are within Hamming distance one are called {\em neighboring}. 
A mechanism $M:T^n \rightarrow \Delta(S)$ is a function that assigns for any vector of inputs $t\in T^n$ a distribution over $S$ (the notation $\Delta(S)$ denotes the set of probability distributions over the set $S$). The outcome of an execution of $M$ on input $t\in T^n$ is an element $s\in S$ chosen according to the distribution $M(T)$.

\begin{definition}[Differential Privacy~\cite{DMNS06}]\label{def:differential privacy}
A mechanism $M: T^n \rightarrow \Delta(S)$ preserves $\epsilon$-differential privacy if for all neighboring $t,t'\in T^n$  and for all (measurable) subsets $S'$ of $S$ it holds that 
$$
M(t)(S') \leq e^\epsilon\cdot M(t')(S').
$$
\end{definition}

The following simple lemma follows directly from the above definition (the proofs for Lemma~\ref{lem:indifference} and Theorem~\ref{thm:expdp} below are not new and are included for completeness in Appendix~\ref{Missing Proofs}):

\begin{lemma}\label{lem:indifference}
Let $M: T^n \rightarrow \Delta(S)$ be a mechanism that preserves $\epsilon$-differential privacy and let $g: S\rightarrow \re^{\geq 0}$. Then, for all neighboring $t,t'\in T^n$
$$
\expectation_{s\sim M(t)} [g(s)] \leq e^\epsilon \expectation_{s\sim M(t')} [g(s)].
$$
In particular, if $\epsilon \leq 1$ and $g: S\rightarrow [0,1]$,
$$
\left|\expectation_{s\sim M(t)} [g(s)] - \expectation_{s\sim M(t')} [g(s)]\right| <  2\epsilon.
$$
\end{lemma}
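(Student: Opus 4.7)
The plan is to prove the first inequality by applying the definition of differential privacy to level sets of $g$, and then derive the second inequality from the first by using a linear approximation of $e^\epsilon$ together with symmetry.

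For the first inequality, I would use the fact that a non-negative random variable $X$ satisfies $\expectation[X] = \int_0^\infty \Pr[X \geq x]\, dx$. Applied to $g$ under the distribution $M(t)$, this gives
$$\expectation_{s\sim M(t)}[g(s)] = \int_0^\infty M(t)\bigl(\{s : g(s) \geq x\}\bigr)\, dx.$$
For each $x \geq 0$, the set $S_x = \{s : g(s) \geq x\}$ is a (measurable) subset of $S$, so by Definition~\ref{def:differential privacy} we have $M(t)(S_x) \leq e^\epsilon M(t')(S_x)$. Pulling the constant $e^\epsilon$ out of the integral and re-using the level-set formula for the distribution $M(t')$ yields the claimed bound $\expectation_{s\sim M(t)}[g(s)] \leq e^\epsilon \expectation_{s\sim M(t')}[g(s)]$. (In the discrete case the same argument is just $\sum_s g(s) M(t)(s) \leq e^\epsilon \sum_s g(s) M(t')(s)$, using pointwise differential privacy on singletons.)

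For the second inequality, I would start from the first bound and rewrite it as
$$\expectation_{s\sim M(t)}[g(s)] - \expectation_{s\sim M(t')}[g(s)] \leq (e^\epsilon - 1)\cdot \expectation_{s\sim M(t')}[g(s)].$$
Since $g(s) \in [0,1]$ the right-hand expectation is at most $1$, and a short convexity check shows that $e^\epsilon - 1 \leq 2\epsilon$ for $\epsilon \in [0,1]$ (the function $h(\epsilon) = 2\epsilon - e^\epsilon + 1$ satisfies $h(0) = 0$ and $h(1) = 3 - e > 0$, and is concave, so $h \geq 0$ on $[0,1]$). Swapping the roles of $t$ and $t'$ gives the symmetric bound, and combining the two yields $|\expectation_{s\sim M(t)}[g(s)] - \expectation_{s\sim M(t')}[g(s)]| < 2\epsilon$.

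There is no real obstacle here; the only point requiring a hint of care is the reduction from expectations of a general non-negative function to a statement about measurable sets, which is exactly what the level-set representation provides. Everything else is bookkeeping.
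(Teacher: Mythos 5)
Your proof is correct and essentially the same as the paper's: the paper works with the discrete sum $\sum_s g(s)M(t)(s) \leq e^\epsilon \sum_s g(s)M(t')(s)$ (exactly your parenthetical remark, since $S$ is discrete here), and the second part --- rewriting as $(e^\epsilon-1)\expectation_{s\sim M(t')}[g(s)]$, bounding the expectation by $1$, checking $e^\epsilon - 1 \leq 2\epsilon$ on $[0,1]$, and symmetrizing --- matches the paper step for step. The level-set representation is a harmless generalization that you do not actually need.
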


A simple corollary of Lemma~\ref{lem:indifference} is that 
$\left|\expectation_{s\sim M(t)} [\hat g(s)] - \expectation_{s\sim M(t')} [\hat g(s)]\right| <  4\epsilon$ for neighboring $t,t'$ and $\hat g: S\rightarrow [-1,1]$.

\begin{definition}[\cite{MT07}]\label{def:expmech}
Let $f:S\times T^n \rightarrow \re^{\geq 0}$ and let $\epsilon >0$. The exponential mechanism for $f$ with parameter $\epsilon$ is 
$$M^\epsilon_f(t)(s) = \frac{\exp(\epsilon f(s,t))}{\sum_{s'\in S} \exp(\epsilon f(s',t))} \quad\mbox{for all}~s\in S.$$
\end{definition}

\begin{theorem}[\cite{MT07}]\label{thm:expdp}
Let $\Delta f$ be the maximum over all $s\in S$ and neighboring $t,t'\in T^n$ of $f(s,t)-f(s,t')$.
$M^{\frac{\epsilon}{2\Delta f}}_f$ preserves $\epsilon$-differential privacy.
\end{theorem}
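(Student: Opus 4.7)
The plan is to prove the standard bound on the exponential mechanism by writing out the ratio $M^{\epsilon'}_f(t)(s) / M^{\epsilon'}_f(t')(s)$ for neighboring $t,t'$ and the parameter $\epsilon' = \epsilon/(2\Delta f)$, then decomposing it into a ``numerator'' factor from the single score $\exp(\epsilon' f(s,\cdot))$ and a ``denominator'' factor coming from the normalizing partition sum $Z(t)=\sum_{s'} \exp(\epsilon' f(s',t))$. The goal is to show each factor is bounded by $e^{\epsilon/2}$, so that their product gives the required $e^{\epsilon}$ bound, after which $\epsilon$-differential privacy follows because the same bound holds for every outcome $s$ and hence integrates to a bound on $M(t)(S')/M(t')(S')$ for any measurable $S'$.

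First I would observe that, because the definition of $\Delta f$ takes the max over all neighboring pairs (which is a symmetric relation), we have $|f(s,t)-f(s,t')|\leq \Delta f$ for every $s$ and every neighboring $t,t'$. Therefore the numerator ratio satisfies
\[
\frac{\exp(\epsilon' f(s,t))}{\exp(\epsilon' f(s,t'))} = \exp\bigl(\epsilon'(f(s,t)-f(s,t'))\bigr) \leq \exp(\epsilon'\Delta f) = \exp(\epsilon/2).
\]

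Next, for the ratio of partition sums I would use the same pointwise bound $f(s',t) \leq f(s',t')+\Delta f$, which gives termwise $\exp(\epsilon' f(s',t)) \leq \exp(\epsilon'\Delta f)\cdot \exp(\epsilon' f(s',t'))$. Summing over $s'\in S$ yields $Z(t)\leq \exp(\epsilon/2)\,Z(t')$, so
\[
\frac{Z(t')}{Z(t)} \leq \exp(\epsilon/2).
\]
Multiplying the two bounds gives
\[
\frac{M^{\epsilon'}_f(t)(s)}{M^{\epsilon'}_f(t')(s)} \leq \exp(\epsilon/2)\cdot \exp(\epsilon/2) = e^{\epsilon},
\]
for every $s\in S$, from which Definition~\ref{def:differential privacy} follows by integrating (or summing) over any measurable $S'\subseteq S$.

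I do not expect any real obstacle here; the only place one has to be careful is the sign convention in the definition of $\Delta f$ and confirming that the stated ``max over neighboring pairs'' already encompasses both directions so that absolute values can be dropped. Everything else is a two-line calculation, and the factor of $2$ in the parameter $\epsilon/(2\Delta f)$ is exactly what splits the privacy budget evenly between the numerator and the normalizer.
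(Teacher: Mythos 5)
Your argument is correct and is essentially the paper's proof: split the ratio $M^{\epsilon'}_f(t)(s)/M^{\epsilon'}_f(t')(s)$ into the single-score factor and the partition-sum factor, bound each by $e^{\epsilon/2}$ using $|f(s,t)-f(s,t')|\le \Delta f$, and multiply. One small direction slip in the partition-sum step: from the termwise bound you derive $Z(t)\le e^{\epsilon/2}Z(t')$, which yields $Z(t)/Z(t')\le e^{\epsilon/2}$ rather than the needed $Z(t')/Z(t)\le e^{\epsilon/2}$; you should instead apply the termwise bound with the roles of $t$ and $t'$ swapped (equivalently, bound each term of $Z(t)$ \emph{below} by $e^{-\epsilon/2}\exp(\epsilon' f(s',t'))$, which is what the paper does), a fix that is immediate from the two-sided bound you already established via the symmetry of the neighboring relation.
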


\begin{definition}[Mutual Information]\label{def:mutual information}
Let $X,Y$ be two random variables. The mutual information between $X$ and $Y$ is defined as
$$
I(X;Y) = H(X) + H(Y) - H(X,Y),
$$
where $H(X) = - \sum_{x\in S} \Pr[X=x] \cdot \log \left(\Pr[X=x]\right)$ is the Shannon entropy of $X$.
\end{definition}
It is well known that $I(X;Y) = H(X) - H(X|Y)$, i.e. $I(X;Y)$ measures the reduction in entropy in $X$ caused by conditioning on $Y$ (and symmetrically, $I(X;Y) = I(Y;X) = H(Y)- H(Y|X)$). The following simple observation follows from the data processing inequality (see, e.g.,~\cite[pp.\ 32]{CT91}):
\begin{obs}\label{obs:data processing}
For all (randomized) functions $f$, 
$
I(f(X);Y) \leq I(X,Y).
$
\end{obs}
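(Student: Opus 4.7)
My plan is to derive the inequality directly from the data processing inequality by constructing an appropriate Markov chain. The standard form of the data processing inequality (DPI) states that if $A \to B \to C$ is a Markov chain, then $I(A; C) \leq I(A; B)$.

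First I would verify that $Y \to X \to f(X)$ is a Markov chain. This is where we use the fact that $f$ is a (possibly randomized) function of $X$ alone: by definition, the randomness used to compute $f(X)$ is independent of $Y$, so conditioned on $X$, the random variable $f(X)$ is independent of $Y$. This is exactly the Markov property we need.

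Next I would apply the DPI to the chain $Y \to X \to f(X)$, obtaining $I(Y; f(X)) \leq I(Y; X)$. By the symmetry of mutual information, $I(f(X); Y) = I(Y; f(X))$ and $I(X; Y) = I(Y; X)$, which yields the claimed bound $I(f(X); Y) \leq I(X; Y)$.

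The only delicate point is formalizing the Markov property when $f$ is randomized; this just amounts to modeling $f(X)$ as a deterministic function of $X$ together with an independent random seed $R$, with $R$ independent of $(X,Y)$. Given this, conditional independence of $f(X) = f(X,R)$ from $Y$ given $X$ is immediate, and the rest follows by citing the DPI (e.g., \cite[pp.\ 32]{CT91}). So the statement really is an observation rather than a theorem, and the proof would be one or two lines.
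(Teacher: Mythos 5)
Your proof is correct and matches the paper's approach: the paper simply asserts that the observation ``follows from the data processing inequality'' with a citation to Cover and Thomas, and your argument (forming the Markov chain $Y \to X \to f(X)$ via an independent random seed and invoking the DPI plus symmetry of mutual information) is exactly the standard justification being invoked. No gaps.
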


\begin{obs}\label{obs:bounds on I and H}
Let $M: T^n \rightarrow \Delta(S)$ be an $\epsilon$ differentially private mechanism, then for all random variables $X=(X_1,\ldots,X_n) \in \Delta(T^n)$ it holds that 
$
I(X_i; M(X), X_{-i}) \leq \epsilon. 
$ 
\end{obs}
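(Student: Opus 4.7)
The plan is to reduce the bound on $I(X_i; M(X), X_{-i})$ to the conditional mutual information $I(X_i; M(X) \mid X_{-i})$, and then to control the latter via the pointwise differential privacy guarantee. By the chain rule for mutual information,
\[
I(X_i; M(X), X_{-i}) = I(X_i; X_{-i}) + I(X_i; M(X) \mid X_{-i}).
\]
In the mechanism-design setting considered in the paper the agents' types are drawn independently, so $I(X_i; X_{-i}) = 0$ and it suffices to show $I(X_i; M(X) \mid X_{-i}) \leq \epsilon$. (Without an independence assumption on the coordinates of $X$ the stated inequality can fail, since $X_i$ itself may already be informative about $X_{-i}$.)

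To bound the conditional mutual information, I would fix an arbitrary realization $X_{-i} = t_{-i}$ and use the standard identity that expresses conditional mutual information as an expected Kullback--Leibler divergence:
\[
I(X_i; M(X) \mid X_{-i} = t_{-i}) = \expectation_{t_i \sim X_i \mid t_{-i}} \Bigl[ D\bigl( M(t_i, t_{-i}) \,\big\|\, \overline{M}_{t_{-i}} \bigr) \Bigr],
\]
where $\overline{M}_{t_{-i}}(\cdot) = \sum_{t'_i} \Pr[X_i = t'_i \mid X_{-i} = t_{-i}] \cdot M(t'_i, t_{-i})(\cdot)$ is the marginal distribution of $M(X)$ given $X_{-i} = t_{-i}$, i.e., a convex combination of the distributions $M(t'_i, t_{-i})$.

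For the last step I would apply Definition~\ref{def:differential privacy}. Since $(t_i, t_{-i})$ and $(t'_i, t_{-i})$ are neighboring for every $t'_i$, we have $M(t_i, t_{-i})(s) \leq e^\epsilon M(t'_i, t_{-i})(s)$ for every $s$ and every $t'_i$. Averaging the right-hand side over $t'_i$ with weights $\Pr[X_i = t'_i \mid X_{-i} = t_{-i}]$ gives the pointwise bound $M(t_i, t_{-i})(s) \leq e^\epsilon \, \overline{M}_{t_{-i}}(s)$, so the log-ratio $\ln \bigl( M(t_i, t_{-i})(s) / \overline{M}_{t_{-i}}(s) \bigr)$ is at most $\epsilon$; taking expectation under $M(t_i, t_{-i})$ yields $D\bigl(M(t_i, t_{-i}) \,\|\, \overline{M}_{t_{-i}}\bigr) \leq \epsilon$, and averaging over $t_i$ and then $t_{-i}$ gives the claim. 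I expect the only mild subtlety to be aligning the base of the logarithm used in $H(\cdot)$ with the $e^\epsilon$ convention of differential privacy --- the clean bound $\epsilon$ (in nats) arises precisely when $H$ is computed with natural logarithms; with $\log_2$ the right-hand side would become $\epsilon / \ln 2$.
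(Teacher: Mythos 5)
Your argument is correct, and in fact the paper offers no proof of this observation at all (the appendix only proves Lemma~\ref{lem:indifference} and Theorem~\ref{thm:expdp}), so there is nothing to deviate from: what you give is the standard derivation, via the chain rule $I(X_i; M(X), X_{-i}) = I(X_i;X_{-i}) + I(X_i; M(X)\mid X_{-i})$, the representation of conditional mutual information as an expected KL divergence to the mixture $\overline{M}_{t_{-i}}$, and the pointwise bound $M(t_i,t_{-i})(s)\leq e^{\epsilon}\,\overline{M}_{t_{-i}}(s)$ obtained by averaging the neighboring-database guarantee over $t'_i$, which caps the log-likelihood ratio at $\epsilon$ in nats. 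Each of these steps is sound. Your two caveats are also genuine rather than pedantic: the statement as written quantifies over \emph{all} $X\in\Delta(T^n)$, yet it fails for correlated coordinates (take $X_j=X_i$ uniform on a two-element $T$ and $M$ revealing $X_j$; then $I(X_i;M(X),X_{-i})\geq I(X_i;X_{-i})=\ln 2$, exceeding any small $\epsilon$), so the independence assumption you impose is actually needed and is implicitly assumed by the paper; and the clean constant $\epsilon$ indeed requires entropies in natural logarithms, whereas the paper's Definition~\ref{def:mutual information} leaves the base unspecified (and its Example~\ref{example:MutualInformationProblematic} suggests base $2$, which would cost a factor $1/\ln 2$). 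The only thing I would add is a one-line justification of the monotonicity you implicitly rely on when the paper later uses the weaker consequence $I(X_i;M(X))\leq I(X_i;M(X),X_{-i})$, but that is not needed for the statement as posed.
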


\section{Quantifying Information Utility}\label{modelvi}

Our model is similar to the standard model of mechanism design, with the difference that agents participating in the execution of a mechanism care about their privacy. In the standard model, an agent's type $t_i$ expresses quantities such as a valuation of a good for sale, location, etc., the mechanism chooses an alternative $s$, and the agent's utility is a function of $t_i$ and $s$ (and sometimes, monetary transfers).

When considering privacy-aware agents, we need to introduce the information utility into their utility functions. A first issue that emerges is {\em how should this dis-utility be quantified?} Note that as different agents may value privacy differently, the quantification should be parametrized by agents' privacy preferences. We denote by $v_i$ the privacy preference of agent $i$. That is, an agent type is now composed of the `traditional' type $t_i$, and a privacy preference $v_i$. A second issue that now emerges is that the alternative chosen by the mechanism can leak information about both $t_i$ and $v_i$, and hence leakage about $v_i$ needs also be taken into account.

\paragraph{How is information utility quantified in prior work}

In an early work, \remove{McGrew, Porter, and Shoham} \citeN{MPS03} introduced privacy into agents' utility in the context of non-cooperative computing (NCC)~\cite{ST05}. In their model, agents only care about the case where other agents learn their private types with certainty. This means that privacy is either completely preserved or completely breached, and hence information utility is quantified to be either zero (no breach) or an agent dependent value $v_i>0$. As it is often the case that leaked information is partial or uncertain, we are interested in more refined measures that take partial exposure into account.

A recent work by \remove{Ghosh and Roth}~\citeN{GR11} considers a setting where a data analyst wishes to perform a computation that preserves $\epsilon$-differential privacy and compensates participating agents for their privacy loss. They assume a model where each agent's dis-utility is proportional to the privacy parameter $\epsilon$. I.e., the $i$th agent's dis-utility is 
$$
u^\inf_i = v_i\cdot\epsilon,
$$
where $v_i\geq 0$ is part of the agent's private type. A problem with this quantification is that while $\epsilon$ measures the worst effect the $\epsilon$-differentially private computation can have on privacy, the typical effect on agent $i$ can be significantly lower (see~\cite{DRV10}). Furthermore, 
it can depend on the other agents' inputs to the computation. Another problem, that will be further discussed later, is that this quantification does not consider the information utility due to leakage of information about $v_i$ itself.

The third example we are aware of is from another recent work, by \remove{Xiao}~\citeN{Xiao11}. Similarly to the present work, Xiao considers the setting of mechanism design with privacy-aware agents. The information utility is modeled to be 
$$
u^\inf_i = v_i \cdot I(t_i; M(t_{-i}, \sigma(t_i))),
$$
where $v_i\geq 0$ is the agent privacy valuation. Note that with this measure, the dis-utility of agent $i$ depends on the distribution of her and the other agents' types, and on her own strategy $\sigma$. The following example demonstrates that this dependency on $\sigma$ is problematic. 

Consider the single-agent mechanism below, where the agent's private type consists of a single bit:
\begin{example}[The ``Rye or Wholewheat'' game]\label{example:MutualInformationProblematic}
Alice is preparing a sandwich for Bob and inquires whether he prefers Rye (R) or Wholewheat (W). Bob wants to enjoy his favorite sandwich, but does not want Alice to learn his preference. Assume that Bob's type is uniformly chosen in $\{\mbox{\rm R},\mbox{\rm W}\}$ and consider these two possibilities for Bob's strategy:
\begin{enumerate}
\item If Bob provides his true preference he will enjoy the sandwich. However, his information (dis)utility would be maximized as $I(t_{Bob};M(\sigma_{truthful}(t_{Bob})))=1$.
\item If Bob answers at random he will enjoy the sandwich with probability one-half.\footnote{This is equivalent to encrypting Bob's type using a {\em one time pad}.}  However, as his response does not depend on his preference no loss in privacy would be incurred, hence we get $$I(t_{Bob};M(\sigma_{random}(t_{Bob})))=0 \ .$$
\end{enumerate}
\end{example}

Note that since Bob's type is $\mbox{\rm R}$ or $\mbox{\rm W}$ with equal probability Alice's views of Bob's actions (and hence also the outcome of the mechanism) are distributed identically whether he uses $\sigma_{truthful}$ or $\sigma_{random}$. Hence, while mutual information differs dramatically between the two strategies -- suggesting that Bob is suffering a privacy loss due to Alice learning his type in one but not in the other -- it is impossible for Alice to distinguish between the two cases!

A few more words are in place regarding the source of this problem. First note that while the example demonstrates that $I(t_{Bob};M(\sigma(t_{Bob})))$ is a problematic as a measure of privacy it does not imply that $\sigma_{random}$ (nor the one time pad) is at fault (in fact, $\sigma_{random}$ provides Bob with perfect privacy even in a setting where Alice gets to know which strategy Bob uses, a guarantee $\sigma_{truthful}$ definitely does not provide).
What the example capitalizes on is the fact that the standard game-theoretic modeling does not rule out the possibility that Alice does not get to see what Bob's strategy is. In such situations, it can happen that the more robust $\sigma_{random}$ is an overkill, as it provides Bob with less utility. We hence argue that the notion of information cost should be free of making assumptions on Alice's knowledge of $\sigma$.

\subsection{Our Approach} 

We deviate from the works cited above as we do not present a new measure for information utility. We use a significantly weaker notion instead.
To motivate our approach, re-consider the measures discussed above. 

Looking first at the measure in~\cite{GR11}, i.e., $v_i\cdot\epsilon$, we note that while in $\epsilon$-differential mechanisms the ratio $\Pr[M(t)=s]/\Pr[M(t')=s]$ is bounded by $e^\epsilon$ for all neighboring $t,t'$ and $s$, it is plausible that the worst case behavior (i.e., outputting $s$ such that $\Pr[M(t)=s]/\Pr[M(t')=s] = e^\epsilon$) occurs with only a tiny probability. This suggest that while $v_i\cdot\epsilon$ may not be a good measure for information utility, it can serve as a good {\em upper bound} for this utility. Examining the measure in~\cite{Xiao11} and trying to avoid the problem demonstrated in Example~\ref{example:MutualInformationProblematic} above, we note that by Observation~\ref{obs:data processing} $I(t_i; M(t)) \geq I(t_i; M(t_{-i}, \sigma(t_i)))$ for all $\sigma$, hence, we get that $v_i \cdot I(t_i; M(t))$ is another plausible {\em upper bound} for information utility. 
Finally, taking into account Observation~\ref{obs:bounds on I and H} we get that $I(t_i; M(t)) \leq \epsilon$ and hence we choose to use $v_i\cdot \epsilon$ as it is the weaker of these bounds.

\begin{note}
We emphasize that although our usage of the term $v_i \cdot \epsilon$ is syntactically similar to that of~\cite{GR11}, our usage of this quantity is conceptually very different. In particular, while loss of privacy cannot be used in our constructions for deterring non-truthful agents, the constructions (and proofs) in~\cite{GR11} use the fact that the information utility is (at least) $v_i \cdot \epsilon$ for arguing truthfulness.
\end{note}

\begin{note}
Lemma~\ref{lem:indifference} supports using $v_i\cdot\epsilon$ as an upperbound for information utility in the follwing sense. An individual's concern about her privacy corresponds to a potential decrease in {\em future} utility due to information learned about her. an upper bound on information utility hence should correspond to this (potential) loss in future utility. By Lemma~\ref{lem:indifference}, the information contributed by individual $i$ affects the expectation of every non-negative (similarly, non-positive) function $g$ by {\em at most} a factor of $e^\epsilon$. Let $G_i: S\rightarrow \re$ describe how the future utility of individual $i$ depends on the outcome of $M$. By Lemma~\ref{lem:indifference}, the information utility of that individual is bounded by 
$$
\max_{t \in T^n}(e^\epsilon - 1)\cdot\expectation_{s\sim M(t)}\left|G_i(s)\right| \approx \epsilon\cdot \max_{t \in T^n}\expectation_{s\sim M(t)}\left|G_i(s)\right|,
$$
where the approximation holds for small $\epsilon$. See also a related discussion in~\cite{GR11}. 
\end{note}

\paragraph{Privacy of $v_i$} The mechanisms presented in~\cite{GR11} for selling private information do not protect the privacy of $v_i$ nor they account for the information (dis)utility generated by the leakage of $v_i$. It is further shown that with unbounded $v_i$s it is impossible to construct mechanisms that  compensate agents for their loss in privacy and achieve reasonable accuracy (in the sense that enough agents sell their information). 

Our mechanisms provide an intermediate solution. First, we provide $\epsilon$-differential privacy to {\em all} agents, where the guarantee is with respect to their combined type, i.e., $(t_i,v_i)$, and where $\epsilon$ decreases with the number of agents $n$. This means that privacy improves as $n$ grows.

Furthermore our constructions guarantee that truthfulness is dominant -- {\em taking information utility about the combined type $(t_i,v_i)$ into account} -- for all agents for which $v_i \leq v_{max}$, where under a very mild assumption on the distribution of $v_i$ the bound $v_{max}$ grows with $n$ and the fraction of agents for which $v_i > v_{max}$ decreases with $n$.

\section{The Model}

\paragraph{The Mechanism}
Let $S$ be a finite set of alternatives (a.k.a.\ social alternatives), let $T$ be a finite type set and consider a set of $n$ agents. We consider direct revelation mechanisms that given the declaration of agents about their types selects a social alternative $s\in S$ and makes $s$ public. To isolate loss of privacy due to publication of $s$ from other potential sources of leakage, we will assume that every other information (including, e.g., the agents' declared types and individual monetary transfers) is completely hidden using cryptographic or other techniques.

\paragraph{The Objective Function} The goal of the designer is defined via a real, non-negative objective function over the true types of the agents, $f(t,s)$ that needs to be optimized (by choosing $s$).
$$
f : T^n \times S \rightarrow [0,n\Delta f].
$$
Following~\cite{DMNS06,MT07} we define the sensitivity of $f$ to be 
$$
\Delta f = \max |f(\hat t,s) - f(\hat t', s)|
$$ 
where the maximum is taken over all neighboring $\hat t, \hat t' \in T^n$ and $s\in S$. We assume that for all $s$ the minimum value of $f(t,s)$ is $0$ and then, given that sensitivity is $\Delta f$ by a hybrid argument we get that $f\leq n \Delta f$.

\paragraph{Privacy-Aware Agents} We extend the traditional setting of selfish agents to include agents who care not only about their utility $u^\out_i$ from the outcome $s$ of the mechanism, but also about the (negative) information utility $u^\inf_i$ incurred from the leakage of information about their private type through the public output $s$. 

For simplicity, we consider a setting where the overall utility of an agent is the sum of the two:\footnote{Admittedly, this separation of the utility function is sometimes artificial. However, we find it conceptually helpful.} 
$$
u_i = u_i^\out - u_i^\inf.
$$

An agent's type $\tau_i$ is modeled by a pair $\tau_i=(t_i,v_i) \in T\times\re^{\geq 0}$, where $T$ is the ``traditional'' game type and $v_i$ is the privacy valuation of agent $i$. We emphasize that agents care about the privacy of the whole pair and the information utility  corresponds to the loss in privacy of both $t_i$ and $v_i$ (hence, one cannot simply publish $v_i$). The vectors $t=(t_1,\ldots,t_n)$ and $v=(v_1,\ldots,v_n)$ denote the types of all agents. Trying to maximize her utility, agent $i$ may hence act strategically and declare $\tau'_i=\sigma_i(\tau_i)=
(t'_i,v'_i)$ to $M$ instead of $\tau_i$.

The ``traditional'' game utility of agent $i$ is defined as $u_i^\out : T \times S \rightarrow [-1,1].$
Following our discussion above, we define $u_i^\inf : \re^{\geq 0} \rightarrow \re^{\geq 0}$, and the only assumption we make is that 
$$
u_i^\inf(v_i) \leq v_i\cdot \epsilon
$$ 
for $\epsilon$ being the parameter of the differentially private mechanism executed (i.e., 
$e^\epsilon = \max\left(M(t)(S)/M(t')(S)\right)$
where the maximum is taken over all neighboring $t,t'\in T^n$ and $S'\subseteq S$). Note that unlike $u_i^\out$ that only depends on the outcome of the mechanism, $u_i^\inf$ depends on the mechanism itself.

In our analysis we identify a subset of agents that we call {\em participating} for whom truthtelling is strictly dominant. 
A mechanism approximately implements $f$ if assuming that participating agents act truthfully (and other agents act arbitrarily) it outputs $s$ that approximately optimizes $f$.

\subsection{Warmup: A Privacy-Aware Poll}\label{warmup}

The following simple electronic poll will serve to illustrate some of our ideas:

\begin{example}[An Electronic Poll]\label{example:warmup}
An electronic publisher wishes to determine which of its $m\geq 2$ electronic magazines is more popular. Every agent is asked to specify her favorite magazine, i.e., $t_i \in [m]$, and will receive in exchange an electronic copy of it. For simplicity, we assume that agents' utility does not depend on the poll outcome.

Following our convention, we assume ideal cryptography here, that is, no information beyond the outcome of the poll is leaked. In particular, every agent receives the electronic magazine without anybody (including the publisher) knowing which magazine has been transferred. Agents, however, are privacy-aware, and hence take into account that the outcome of the poll itself reveals information about their preferences.
\end{example}

Denote by $t'$ the vector of agents' declarations. 
For $s\in [m]$ let $f(s,t')=|\{i | t'_i =s\}|$ and note that $\Delta f=1$. Consider the exponential mechanism $M=M^{\frac{\epsilon}{2}}_f$ as in Definition~\ref{def:expmech}. I.e., 
$$\Pr[M(t')=j] = \frac{e^{\epsilon n'_j/2}}{\sum_{\ell=1}^m e^{\epsilon n'_\ell/2}},$$ 
where $n'_j = f(j,t')$  is the number of agents who declared they rank magazine $j$ first. By Theorem~\ref{thm:expdp}, $M$ preserves $\epsilon$-differential privacy.

Note that if $n'_j \geq n'_\ell + k$ then 
$$\Pr[M(t')=\ell] \leq \frac{e^{\epsilon n'_\ell/2}}{e^{\epsilon n'_j/2}} \leq \frac{e^{\epsilon n'_\ell/2}}{e^{\epsilon (n'_\ell+k)/2}} = e^{-\epsilon k/2}.$$
Hence, 
$$\Pr[M(t')~\mbox{outputs}~\ell~\mbox{such that}~n'_\ell < \max_j n'_j - k] \leq (m-1)e^{-\epsilon k/2}.$$

The agent utilities are $u_i^\out-u_i^\inf$ where 
\begin{itemize}
\item $u_i^\out$ is the utility that the agent gains from receiving the magazine she specified she prefers. Note that this utility depends only on the declared type and it is maximized for $t_i$, the true type of the agent; we assume that $u^\out_i(t_i)-u^\out_i(t'_i) \geq g$ (Alternatively, the publisher does not care if agent $i$ reports $t'_i$ if $u^\out_i(t_i)-u^\out_i(t'_i) < g$).

\item $u_i^\inf \leq \epsilon \cdot v_i$ is the privacy loss from the mechanism. 
\end{itemize}
Note that $\epsilon < g/v_i$ suffices for making agent $i$ truthful: acting untruthfully agent $i$ will lose at least $g$ in $u^\out_{i}$ and gain no more than $\epsilon v_i$ in $u_i^\inf$. 
Denote by $n_j$ the number of agents who rank magazine $j$ first (note the difference from $n'_j$ that correspond to declared types). To demonstrate that the mechanism is efficient, we need to make some (hopefully reasonable) assumptions on the distribution of $v_i$. We explore three possibilities:

\paragraph{Bounded $v_i$} We begin with a simplified setting where we assume that there exists $v_{max} =O(1)$ such that $\forall i : v_i \leq v_{max}$. In this case it is enough to set $\epsilon < g/v_{max}=O(1)$ to make truthfulness dominant for {\em all} agents. Hence, assuming all agents are truthful, we get $n'_j=n_j$ for all $j\in[m]$ and hence the probability that $M(t')=M(t)$ outputs $\ell$ such that $n_\ell < \max_j n_j - k$ is bounded by $(m-1)e^{-\epsilon k/2}$.

Note that in this case the computation output leaks no information about the privacy valuations $v$.

\paragraph{Bounded $v_i$, Except for a Small Number of Agents} A more realistic setting allows for a small number of agents with $v_i > v_{max}$. We change the mechanism $M$ to also consider the reported $v'_i$ so that inputs from agents with $v'_i\geq v_{max}$ are ignored. Regardless of what agents with $v_i > v_{max}$ report, we call them {\em non-participating}.

As before, by setting $\epsilon< g/v_{max}$ we make truthfulness dominant for all agents with $v_i\leq v_{max}$. We can hence guarantee a non-trivial accuracy. Let $n_{np}$ be the number of non-participating agents. In the worst case, non-participating agents deflate the count of a popular magazine and inflate the count of an unpopular magazine, making it look more popular than it really is. Taking this into account, we get that 
$$\Pr[M(t')~\mbox{outputs}~\ell~\mbox{ such that}~n_\ell < \max_j n_j - k-2n_{np}]\leq (m-1)e^{-\epsilon k/2}.$$

Note that we lose truthfulness for non-participating agents. We do, however, guarantee $\epsilon$-differential privacy for these agents.

\paragraph{Large Populations} Assume we do not care if the mechanism does not output the most popular choice if it does not have significant advantage over the other, e.g., when $k+n_{np} = O(n^{\alpha})$ for some $0 < \alpha < 1$. This allows us to set $\epsilon(n) = n^{-\alpha}$ and hence truthfulness is dominant for agents with $v_i \leq g/\epsilon = v_{max}(n) \in  O(n^\alpha)$. Note that $v_{max}$ grows with $n$, hence we expect the fraction of non-participants $n_{np}/n$ to diminish with $n$. If $n$ is large enough so that the fraction of agents for which $v_i > v_{max}(n)$ is at most $1/n^{1-\alpha}$ then we get the desired accuracy.

As before, we lose truthfulness for non-participating agents, and only guarantee $\epsilon(n)$-differential privacy for the non-participating agents. Note, however, that the fraction of non-participating agents diminishes with $n$, and, furthermore, their privacy guarantee improves with $n$ (i.e., $\epsilon(n)$ decreases).

\subsection{Admissible Privacy Valuations}

In the rest of the paper we only focus on large populations (the analysis can be easily modified for the case where $v_i$ is bounded except for a small number of agents). We will design our mechanisms for ``nicely-behaving'' populations:

\begin{definition}[Admissible Valuations]\label{def:admissibleValuations}
A population of $n$ agents is said to have $(\alpha,\beta)$-{\em admissible valuations} 
if $$\frac{|\{i: v_i > n^\alpha\}|}{n} \leq n^{-\beta}.$$
\end{definition}

To partly justify our focus on admissible valuations, consider the case where $v_i$ are chosen, i.i.d., from some underlying distribution $\calD$ over $\re^{\geq 0}$.
\begin{definition}[Admissible Valuation Distribution]\label{def:admissibleDistributions}
A valuation distribution $\calD$ is called $(\alpha,\beta)$-{\em admissible} if $$\Pr_{v\sim \calD}[v > n^\alpha] =O(n^{-\beta}).$$
\end{definition}

Note that if $\calD$ has finite expectation, then (using Markov's inequality) $\Pr[v > n^\alpha] \leq \expectation[v]/n^\alpha =O(n^{-\alpha})$, and hence $\calD$ is $(\alpha,\beta)$-admissible for all $\beta \leq \alpha$. If $\calD$ has finite variance then (using Chebyshev's inequality) $\Pr[v > n^\alpha] \leq \variance[v]/(n^\alpha - \expectation[v])^2 = O(n^{-2\alpha})$, and hence $\calD$ is $(\alpha,\beta)$-admissible for all $\beta \leq 2\alpha$. 
More generally, consider the following simple generalization of Chebyshev's inequality to even $p$th moment:
$$
\Pr[|X -\expectation[X]| > t] = \Pr[(X -\expectation[X])^p > t^p] \leq \frac{\expectation \left[(X-\expectation[X])^p\right]}{t^p}.
$$
Using this inequality in the argument above we get that if $\calD$ has finite even $p$th moment then $\calD$ is $(\alpha,\beta)$-admissible for all $\beta \leq p\alpha$. 
We conclude that if $\calD$ has finite $p$th moment then $\calD$ is $(\alpha,1-\alpha)$-admissible for $\alpha \geq 1/(p+1)$.
In particular, if $\calD$ has finite moments of all orders then $\calD$ is $(\alpha,1-\alpha)$-admissible for all $\alpha \in(0,1)$. 

We can even consider a notion of \emph{strong admissibility}:
\begin{definition}[Strongly Admissible Valuation Distribution]
A valuation distribution $\calD$ is called $\alpha$-{\em strongly admissible} if $$\Pr_{v\sim \calD}[v > (\log n)^\alpha] = n^{-\omega(1)},$$ 
where $n^{-\omega(1)}$ denotes a function that is negligible in $n$. 
\end{definition}
For example, the Normal distribution is $\alpha$-strongly admissible. In our analysis, however, we only use the more conservative notion of admissibility as in definitions~\ref{def:admissibleValuations},~\ref{def:admissibleDistributions}.

These simple observations suggest that $(\alpha,1-\alpha)$-admissibility is a relatively mild assumption that would typically hold in large populations even for small values of $\alpha$. 

\subsection{The Privacy-Aware Poll with Admissible Valuations}

Returning to our example, let $\alpha$ be the smallest positive value such that the agent population can be assumed to be $(\alpha,1-\alpha)$-admissible. 
By setting $v_{max} = n^\alpha$ and $\epsilon = g/v_{max} = gn^{-\alpha}$ we get that $n_{np}\leq n \cdot n^{-(1-\alpha)} = n^\alpha$. 
Finally, setting $k = n^\alpha(\log n)^2\log m /g$ we get the following:

\begin{claim}\label{clm:admissiblepoll} 
 The probability that $M(t')$ outputs $\ell$ such that $n_\ell < \max_j n_j - 2k$ is negligible in $n$.
\end{claim}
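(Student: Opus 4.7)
The plan is to assemble the ingredients already laid out in the preceding ``Bounded $v_i$, except for a small number of agents'' paragraph and then substitute the parameter choices stated immediately before the claim. First, I would observe that by $(\alpha,1-\alpha)$-admissibility, the number of non-participating agents (those with $v_i > v_{max}=n^\alpha$) satisfies $n_{np}\leq n\cdot n^{-(1-\alpha)}=n^\alpha$. Next, the choice $\epsilon=g/v_{max}$ guarantees that for any participating agent the potential gain in information utility, $\epsilon v_i \leq \epsilon v_{max}=g$, is dominated by the loss $g$ from misreporting, so participating agents report truthfully. Consequently, the only discrepancies between the true counts $n_j$ and the declared counts $n'_j$ come from the at most $n_{np}$ non-participating agents.

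Next I would recall the inequality already derived for this setting:
$$\Pr[M(t')~\mbox{outputs}~\ell~\mbox{ such that}~n_\ell < \max_j n_j - k - 2n_{np}]\leq (m-1)e^{-\epsilon k/2}.$$
To reach the form stated in the claim, I would verify that $2k \geq k+2n_{np}$, i.e.\ $k\geq 2n_{np}$. Since $k=n^\alpha(\log n)^2\log m/g$ and $n_{np}\leq n^\alpha$, the ratio $k/n_{np}\geq (\log n)^2 \log m /g$ exceeds $2$ for all sufficiently large $n$. Hence the event $\{n_\ell<\max_j n_j - 2k\}$ is contained in the event bounded above.

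Finally I would substitute the parameters into the right-hand side. A one-line computation gives
$$\frac{\epsilon k}{2}=\frac{1}{2}\cdot gn^{-\alpha}\cdot\frac{n^\alpha(\log n)^2\log m}{g}=\frac{(\log n)^2\log m}{2},$$
so the bound becomes $(m-1)\cdot e^{-(\log n)^2\log m/2}=(m-1)\cdot m^{-(\log n)^2/2}$, which is super-polynomially small and hence negligible in $n$. The only ``obstacle'' here is purely bookkeeping: checking that the three parameters $v_{max}$ (chosen so admissibility bounds $n_{np}$), $\epsilon$ (chosen small enough relative to $v_{max}$ for truthfulness), and $k$ (chosen large enough to dominate $2n_{np}$ and to make $\epsilon k/2$ super-logarithmic) fit together so that truthfulness of participating agents, smallness of $n_{np}$, and strong concentration of the exponential mechanism all hold simultaneously. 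No new ideas beyond those developed in the preceding pages are required.
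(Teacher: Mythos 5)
Your proof is correct and follows exactly the route the paper intends: the claim is meant to follow by combining the previously derived bound $\Pr[n_\ell < \max_j n_j - k - 2n_{np}] \leq (m-1)e^{-\epsilon k/2}$ with the substitutions $n_{np}\leq n^\alpha$, $\epsilon = gn^{-\alpha}$, and $k = n^\alpha(\log n)^2\log m/g$, together with the observation that $k\geq 2n_{np}$ for large $n$. The bookkeeping in your write-up (the event containment and the computation $\epsilon k/2 = (\log n)^2\log m/2$) is exactly right and yields the stated $n^{-\omega(1)}$ bound.
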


\section{A Generic Construction of Privacy-Aware Mechanisms}\label{sec:generic} 

We now present a generic feasibility result for privacy-aware mechanisms. Our construction is based on the construction of~\cite{NST12}, where differential privacy is used as a tool for mechanism design. The hope is that existence of this generic construction, a relatively simple modification of~\cite{NST12}, is a signal that our model of privacy-aware mechanisms allows constructing mechanisms for many other tasks.

\paragraph{Reactions} We first change our model to incorporate the notion of {\em reactions} introduced in~\cite{NST12}.\footnote{While the standard game-theoretic modeling does not explicitly include reactions, in many settings their introduction is natural. We refer the reader to~\cite{NST12} for further discussion of this change in the standard model.} Traditionally, an agent's utility is a function of her private type and the social alternative, and the issue of how agents exploit the social choice is not treated explicitly. In~\cite{NST12} this choice was made explicit such that after a social choice is made agents need to take an action (denoted {\em reaction}) to exploit the social alternative and determine their utility. In~\cite{NST12} (and likewise in this work) allowing the mechanism to sometimes restrict the reactions of agents serves as a deterrent against non-truthful agents. 

Let $R$ be a finite set of reactions. We modify the definition of the utility from the outcome of the mechanism to 
$$
u_i^\out : T \times S \times R \rightarrow [-1,1].
$$
Given $t_i,s$ define $$r_i(t_i,s) = \argmax_{r\in R}(u_i^\out(t_i,s,r))$$ to be the optimal reaction for agent $i$ on outcome $s$. 

To illustrate the concept of reactions, consider a mechanism for setting a price for a unlimited supply good (such as in Example~\ref{example:pricing digital good} appearing below). Once the mechanism chooses a price $s$ the possible reactions  are {\rm \bf buy} (i.e., pay $s$ and get the good) and {\rm \bf not buy} (i.e., do not pay $s$ and do not get the good), and reactions are kept hidden by assuming payment and reception of the digital good using perfect cryptography. In this example agents reactions may be restricted to {\bf \bf buy} whenever they bid at least the selected price $s$, and {\rm \bf not buy} otherwise. 

\paragraph{Utility Gap} We assume the existence of a positive {\em gap} $g$ such that for all $t_i\not=t'_i$ there exist $s$ for which the optimal reactions are distinct, and, furthermore, $u_i^\out(t_i,s,r_i(t_i,s)) \geq u_i^\out(t_i,s,r_i(t'_i,s)) + g$. In many setting, a gap $g$ can be created by considering a discrete set of social choices. As in our polling example, an alternative interpretation of the gap $g$ may be that the mechanism designer not care if agent $i$ reports $t'_i$ if $u_i^\out(t_i,s,r_i(t_i,s)) < u_i^\out(t_i,s,r_i(t'_i,s)) + g$.

\subsection{The Construction}

Given a finite type set $T$, a finite set $S$ of alternatives and an objective function $f: T^n \times S \rightarrow \re$ with sensitivity $\Delta f$, we construct a mechanism for approximately implementing $f$.

Let $n$ be the number of agents, and let $\alpha$ be the smallest positive value such that the agent population can be assumed to be $(\alpha,1-\alpha)$-admissible. Let $v_{max}=n^\alpha$. The participating agents will be those with privacy valuations lower than $v_{max}$.
Choose $t_\bot \in T$ to be an arbitrary element of $T$. Non-participating agents will be asked to declare $t_\bot$.

Let $\delta \in [0,1], \epsilon>0$ be parameters to be set later. Agents are asked to declare $t_i$ if $v_i \leq v_{max}$ and $t_\bot$ otherwise. Let $t'_i$ be the declaration of agent $i$. On input $t'=t'_1,\ldots,t'_n$ the mechanism executes as follows:



\begin{algorithm}[h]
\SetAlgoNoLine
\KwIn{A vector of types $t' \in T^n$.}
\KwOut{A social choice  $s \in S$.}

$M$ executes $M_1$ with probability $1-\delta$ and $M_2$ otherwise, where $M_1,M_2$ are as follows:
\begin{description}
\item[Mechanism $M_1$] For all $s \in S$ and $t'\in T^n$, choose $s \in S$ according to the exponential mechanism $M^{\frac{\epsilon}{2 \Delta f}}_f(t')$. 
\item[Mechanism $M_2$] Choose $s \in S$ uniformly at random.
\end{description}

The mechanism $M$ also restrict all agents to their optimal reactions according to their declarations, i.e., $r_i(t'_i,s).$\footnotemark

\caption{The generic mechanism $M$.}
\label{alg:one}
\end{algorithm}

\footnotetext{We note that for the analysis it suffices to restrict reactions only when $M_2$ is activated.}

We begin by analyzing for which agents truthtelling is a dominant strategy:

\begin{claim}\label{clm:truthtelling} If $(v_{max}+4)\epsilon \leq \delta\frac{g}{|S|}$ then truthtelling is dominant for all agents with $v_i \leq v_{max}$.
\end{claim}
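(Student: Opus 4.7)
The plan is to bound the expected utility advantage of truth-telling by separately analyzing the game-utility part (where I exploit the gap $g$ amplified by $M_2$ and the near-invariance of $M_1$) and the information-utility part (where I use that the whole mechanism is $\epsilon$-differentially private, so each agent's information cost is capped at $v_i\epsilon$).

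First, I would observe that $M$ is $\epsilon$-differentially private: $M_1$ is $\epsilon$-DP by Theorem~\ref{thm:expdp} applied with parameter $\epsilon/(2\Delta f)$, $M_2$ is $0$-DP since its output distribution is independent of the inputs, and therefore the mixture is $\epsilon$-DP. By the assumption on information utility, any participating agent (i.e.\ $v_i\leq v_{max}$) incurs $u_i^\inf\leq v_i\epsilon\leq v_{max}\epsilon$ under any declaration. Hence the information-utility component contributes at most $v_{max}\epsilon$ in favor of the non-truthful strategy.

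Next, I would fix an agent $i$ with true type $t_i$, an alternate declaration $t_i'\neq t_i$, and compare expected game utilities. Splitting by which sub-mechanism fires, I need a lower bound on
\[
\Delta(s;t',t) \;:=\; u_i^\out(t_i,s,r_i(t_i,s)) \;-\; u_i^\out(t_i,s,r_i(t'_i,s)),
\]
averaged over the randomness of the mechanism. Under $M_2$, the choice of $s$ is uniform and independent of the declaration, so
\[
\expectation_{s\sim M_2}\bigl[\Delta(s;t',t)\bigr] \;\geq\; \tfrac{g}{|S|},
\]
because $r_i(t_i,s)$ is optimal (so each summand is nonnegative) and the gap assumption gives at least one alternative $s^*$ contributing $g$. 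Under $M_1$, I would insert the intermediate quantity $\expectation_{s\sim M_1(t_i,t_{-i})}[u_i^\out(t_i,s,r_i(t'_i,s))]$ between the truthful and non-truthful expected utilities: the first difference is nonnegative by optimality of $r_i(t_i,\cdot)$, and the second difference bounds the same $[-1,1]$-valued function evaluated under neighboring mechanism inputs, so by the corollary of Lemma~\ref{lem:indifference} it is at least $-4\epsilon$. Hence the expected truthful-minus-non-truthful game utility under $M_1$ is $\geq -4\epsilon$.

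Combining the two sub-mechanisms with weights $1-\delta$ and $\delta$, the expected game-utility advantage of truth-telling is at least
\[
\delta\cdot\tfrac{g}{|S|} \;-\; (1-\delta)\cdot 4\epsilon \;\geq\; \tfrac{\delta g}{|S|} - 4\epsilon.
\]
Subtracting the worst-case information-utility disadvantage of $v_i\epsilon\leq v_{max}\epsilon$ gives a net truthful advantage of at least $\tfrac{\delta g}{|S|}-(v_{max}+4)\epsilon$, which is nonnegative under the hypothesis. The main obstacle in making this precise is the $M_1$ analysis, because the reactions $r_i(t'_i,s)$ make the agent's effective utility function depend on the declaration, which would spoil a direct application of Lemma~\ref{lem:indifference}; the add-and-subtract step, where the reaction is frozen to $r_i(t'_i,\cdot)$ so that the remaining difference is only in the distribution of $s$, is the key device that lets Lemma~\ref{lem:indifference} apply cleanly.
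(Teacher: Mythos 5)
Your proposal is correct and follows essentially the same route as the paper: the same split into $M_1$ and $M_2$, the same add-and-subtract step freezing the reaction to $r_i(t'_i,\cdot)$ so that the corollary of Lemma~\ref{lem:indifference} applies to a fixed $[-1,1]$-valued function under neighboring inputs, the same $g/|S|$ deterrent from the uniform sub-mechanism, and the same $v_i\epsilon$ cap on the information-utility gain. The only cosmetic difference is that you explicitly note the mixture is $\epsilon$-differentially private and keep the $(1-\delta)$ weight on the $4\epsilon$ term before discarding it, which the paper elides.
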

\begin{proof}
We first analyze the effect of misreporting in $M_1$ and $M_2$:

\paragraph{Misreporting in $M_1$}

As $u_i^\out(t_i,s,r) \in [-1,1]$ we can use the simple corollary following Lemma~\ref{lem:indifference} and get that for all possible declarations of the other agents $t'_{-i}$ and all $t'_i$:
\begin{eqnarray*}
\expectation_{s\sim M_1(t'_{-i}, t'_i)}[u_i^\out(t_i,s,r_i(t'_i,s))] - 
   \expectation_{s\sim M_1(t'_{-i}, t_i)}[u_i^\out(t_i,s,r_i(t_i,s))] & \leq & \\
\expectation_{s\sim M_1(t'_{-i}, t'_i)}[u_i^\out(t_i,s,r_i(t'_i,s))] - 
   \expectation_{s\sim M_1(t'_{-i}, t_i)}[u_i^\out(t_i,s,r_i(t'_i,s))] & < & 4\epsilon,
\end{eqnarray*}
where the first inequality follows from $u_i^\out(t_i,s,r_i(t'_i,s)) \leq u_i^\out(t_i,s,r_i(t_i,s))$.
In words, misreporting can gain at most $4\epsilon$ in the expected $u_i^\out$.\footnote{Similarly, even if reactions are not restricted when $M_1$ is activated we get that: $\expectation_{s\sim M_1(t'_{-i}, t'_i)}[u_i^\out(t_i,s,r_i(t_i,s))] - \expectation_{s\sim M_1(t'_{-i}, t_i)}[u_i^\out(t_i,s,r_i(t_i,s))] < 4\epsilon$. We only need reactions to be restricted when $M_2$ is activated.}
Noting that misreporting can gain agent $i$ at most $v_i\cdot\epsilon$ in $u_i^\inf$, we get that the total gain in utility due to misreporting by agents with $v_i \leq v_{max}$ is $(v_{max}+4)\epsilon$.

\paragraph{Misreporting in $M_2$}

If $t'_i \neq t_i$ then with probability at least $\frac{1}{|S|}$ we get that $r_i(t'_i,s)\not= r_i(t_i,s)$. Since the mechanism restricts agent $i$'s reaction to $r_i(t'_i,s)$ we get that 
$$\expectation_{s\sim M_2(t'_{-i}, t_i)}[u_i^\out(t_i,s,r_i(t_i,s))] - 
   \expectation_{s\sim M_2(t'_{-i}, t'_i)}[u_i^\out(t_i,s,r_i(t'_i,s))] \geq \frac{g}{|S|},$$
where $g$ is the minimal utility gap due to not acting according to the optimal reaction. Note that, as $M_2$ ignores its input, misreporting does not yield a change in $u_i^\inf$. We get that the total loss in utility in $M_2$ due to misreporting is at least $g/|S|$.

We get that if $(v_{max}+4)\epsilon \leq \delta \frac{g}{|S|}$ then overall gain in utility due to misreporting is negative for all agents with $v_i \leq v_{max}$, hence truthtelling is dominant for these agents.
\end{proof}

Let $\opt(t) = \max_{s\in S} f(t,s)$ be the optimal value for $f$. We next show that our mechanism approximately recovers $\opt(t)$. 

\begin{claim}\label{clm:accuracy} If $(v_{max}+4)\epsilon \leq \delta \frac{g}{|S|}$ then $$\expectation_{s\sim M(t')]}[ f(t,s) ]\geq \opt(t)-\Delta f\cdot\left(\delta n + 2n^{\alpha}+  2\ln(n|S|)/\epsilon\right) \ . $$
\end{claim}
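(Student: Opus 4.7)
My plan is to combine Claim~\ref{clm:truthtelling} (which establishes truthtelling for all participating agents under the stated condition on $\epsilon$ and $\delta$) with the admissibility assumption and the standard accuracy guarantee for the exponential mechanism. First I would observe that by Claim~\ref{clm:truthtelling}, every agent with $v_i \leq v_{max}=n^\alpha$ declares $t'_i = t_i$. The only agents who may deviate are the non-participating ones, and by $(\alpha,1-\alpha)$-admissibility there are at most $n \cdot n^{-(1-\alpha)} = n^\alpha$ of them. Hence the declaration vector $t'$ and the true vector $t$ are within Hamming distance $n^\alpha$.

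Second, from the sensitivity of $f$ and a hybrid argument, for every $s \in S$ we have $|f(t,s)-f(t',s)| \leq n^\alpha \cdot \Delta f$, and therefore $\opt(t') \geq \opt(t) - n^\alpha \Delta f$. So transferring any accuracy bound stated in terms of $t'$ back to $t$ costs an additive $2 n^\alpha \Delta f$ term (once for the optimum, once for the realized value).

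Third, I would split the expectation according to which of $M_1, M_2$ is executed. With probability $\delta$ the mechanism $M_2$ is run, in which case we use only the trivial bound $f(t,s) \geq 0$; the contribution to the loss relative to $\opt(t) \leq n\Delta f$ is at most $\delta n \Delta f$. With probability $1-\delta$ the exponential mechanism $M_1 = M^{\epsilon/(2\Delta f)}_f$ is run, and here I would invoke the standard McSherry--Talwar tail bound, namely
\[
\Pr_{s \sim M_1(t')}\!\left[f(t',s) \leq \opt(t') - \tfrac{2\Delta f}{\epsilon}\ln(n|S|)\right] \leq \tfrac{1}{n},
\]
so that $\expectation_{s \sim M_1(t')}[f(t',s)] \geq \opt(t') - \tfrac{2\Delta f}{\epsilon}\ln(n|S|)$ up to a lower-order term absorbed into the $2\Delta f \ln(n|S|)/\epsilon$ slack. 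Combining this with the $n^\alpha\Delta f$ transfer to $t$ yields
\[
\expectation_{s\sim M_1(t')}[f(t,s)] \;\geq\; \opt(t) - 2n^\alpha \Delta f - \tfrac{2\Delta f}{\epsilon}\ln(n|S|).
\]

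Finally, summing the two cases weighted by $\delta$ and $1-\delta$ (and using $\opt(t) \leq n\Delta f$ to bound the loss in the $M_2$ branch) gives exactly
\[
\expectation_{s\sim M(t')}[f(t,s)] \;\geq\; \opt(t) - \Delta f\bigl(\delta n + 2 n^\alpha + 2\ln(n|S|)/\epsilon\bigr).
\]
The only delicate step is the tail-to-expectation conversion for $M_1$: one must absorb the rare $1/n$ failure event, where the loss can be as large as $n\Delta f$, into the stated accuracy slack rather than letting it introduce an extra additive $\Delta f$. I expect this bookkeeping to be the main (though minor) obstacle, and it is handled cleanly either by choosing the failure probability $\leq 1/n$ so the worst-case contribution is only $\Delta f$ (absorbed in the $2\ln(n|S|)/\epsilon$ slack) or by integrating the tail directly.
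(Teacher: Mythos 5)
Your proposal is correct and follows essentially the same route as the paper's proof: truthtelling from Claim~\ref{clm:truthtelling} plus admissibility bounds the Hamming distance between $t'$ and $t$ by $n^\alpha$, the sensitivity/hybrid argument transfers the bound from $t'$ to $t$ at cost $2n^\alpha\Delta f$, the exponential-mechanism tail bound with per-alternative failure probability $1/(n|S|)$ handles $M_1$, and the $M_2$ branch is charged $\delta n\Delta f$ via $f\geq 0$ and $\opt(t)\leq n\Delta f$. Even the "delicate" tail-to-expectation step you flag is resolved exactly as in the paper, by letting the $1/n$ failure event cost at most $\Delta f$ and absorbing it into the stated slack.
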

\begin{proof}
Define $\opt' = \max_s f(t',s)$. Denote by $\bar t_k$ the vector constructed from the $k$ first entries of $t$ and the $n-k$ last entries of $t'$. For all $s$ we have that 
$$
f(t',s) = f(\bar t_0,s) = \sum_{k=0}^{n-1} \left(f(\bar t_k, s) - f(\bar t_{k+1},s)\right) + f(t,s).
$$ 
Note that by Claim~\ref{clm:truthtelling} $t'_i\not= t_i$ for at most $n^\alpha$ entries, and hence $f(\bar t_k, s) - f(\bar t_{k+1},s)\not=0$ for at most $n^\alpha$ values of $k$, in which case it is upper bounded by $\Delta f$. We get hence that $\opt' \geq \opt(t) - n^\alpha \Delta f$.

We get that $M_1(t')$ outputs $s'$ such that $f(t',s') < \opt' - 2\Delta f \ln(n|S|)/\epsilon$ with probability 
$$\frac{\exp(\epsilon f(t',s')/2 \Delta f)}{\sum_{s\in S} \exp(\epsilon f(t',s)/2 \Delta f)} \leq 
\frac{\exp(\epsilon (\opt'-2\Delta f\ln(n|S|)/\epsilon)/2 \Delta f)}{\exp(\epsilon \opt'/2\Delta f)} = \frac{1}{n|S|}.$$
Using the union bound (over elements of $S$), and the fact that $\opt'\leq n \Delta f$, we get a lower bound on the expected revenue of $M_1$ as follows:
\begin{eqnarray*}
\expectation_{s\sim M_1(t')} [f(t,s)] & \geq & (\opt'-2\Delta f \ln(n|S|)/\epsilon)\left(1-|S|\frac{1}{n|S|}\right) \\
& \geq & \opt'-2\Delta f \ln(n|S|)/\epsilon - \Delta f \\
& \geq &\opt(t)-2n^{\alpha}\Delta f -  2\Delta f \ln(n|S|)/\epsilon.
\end{eqnarray*}

We conclude that 
\begin{eqnarray*}
\expectation_{s\sim M(t')} [f(t,s)] & \geq & (1-\delta) \expectation_{s\sim M_1(t')} [f(t,s)] \\
& \geq & (1-\delta)\left(\opt(t)-2n^{\alpha}\Delta f -  2\Delta f \ln(n|S|)/\epsilon \right) \\
& \geq & \opt(t)-\delta n \Delta f - 2n^{\alpha}\Delta f -  2\Delta f\ln(n|S|)/\epsilon.
\end{eqnarray*}
\end{proof}

Setting $\epsilon = n^{-(1+\alpha)/2}\sqrt{g\ln( n|S|)/|S|}$ and $\delta = 2 n^{(\alpha-1)/2}\sqrt{|S|\ln(n|S|)/g}$ we get 
\begin{theorem}\label{thm:main}
Let $n$ be the number of agents, $T$ be a finite type set and $S$ a finite set of alternatives. Let $f: T^n \times S \rightarrow \re$ be an objective function with sensitivity $\Delta f$ and $M$ be the mechanism described in~Algorithm~\ref{alg:one}.

If $\alpha$ is such that the agent population can be assumed to be $(\alpha,1-\alpha)$-admissible, then $M$ recovers $\opt(t)$ to within additive difference of $O\left(\Delta f n^{(1+\alpha)/2} \sqrt{|S| \ln (n|S|)/g}\right)$.
\end{theorem}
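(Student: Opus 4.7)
\textbf{Proof plan for Theorem~\ref{thm:main}.} The theorem is essentially a direct consequence of Claims~\ref{clm:truthtelling} and~\ref{clm:accuracy}, once the chosen values of $\epsilon$ and $\delta$ are substituted into the bounds. The plan is to first verify the hypothesis $(v_{max}+4)\epsilon \leq \delta g/|S|$ required by both claims, and then to substitute into the accuracy bound $\Delta f\cdot(\delta n + 2n^{\alpha}+2\ln(n|S|)/\epsilon)$ and identify which terms dominate asymptotically.

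\textbf{Step 1: Verifying the hypothesis.} With $v_{max}=n^{\alpha}$, $\epsilon = n^{-(1+\alpha)/2}\sqrt{g\ln(n|S|)/|S|}$, and $\delta = 2n^{(\alpha-1)/2}\sqrt{|S|\ln(n|S|)/g}$, I compute
$$
(v_{max}+4)\epsilon \;\leq\; 2n^{\alpha}\cdot n^{-(1+\alpha)/2}\sqrt{g\ln(n|S|)/|S|} \;=\; 2n^{(\alpha-1)/2}\sqrt{g\ln(n|S|)/|S|}
$$
(absorbing the $+4$ into the constant for $n$ sufficiently large, which is fine since $n^{\alpha}\to\infty$). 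On the other hand,
$$
\delta\cdot\frac{g}{|S|} \;=\; 2n^{(\alpha-1)/2}\sqrt{|S|\ln(n|S|)/g}\cdot\frac{g}{|S|} \;=\; 2n^{(\alpha-1)/2}\sqrt{g\ln(n|S|)/|S|}.
$$
So the two sides are equal (up to the negligible $+4$), and the hypothesis of Claims~\ref{clm:truthtelling} and~\ref{clm:accuracy} holds for $n$ large enough.

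\textbf{Step 2: Substituting into the accuracy bound.} By Claim~\ref{clm:accuracy} it suffices to bound each of the three terms $\delta n$, $2n^{\alpha}$ and $2\ln(n|S|)/\epsilon$. I compute
$$
\delta n \;=\; 2n^{(1+\alpha)/2}\sqrt{|S|\ln(n|S|)/g},\qquad
\frac{2\ln(n|S|)}{\epsilon} \;=\; 2n^{(1+\alpha)/2}\sqrt{|S|\ln(n|S|)/g},
$$
so these two terms contribute the same order. The term $2n^{\alpha}$ is dominated by the other two, since $\alpha < (1+\alpha)/2$ for $\alpha < 1$ (and since $\sqrt{|S|\ln(n|S|)/g}\geq 1$ in the regime of interest). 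Hence
$$
\Delta f\cdot\bigl(\delta n + 2n^{\alpha} + 2\ln(n|S|)/\epsilon\bigr) \;=\; O\!\left(\Delta f\cdot n^{(1+\alpha)/2}\sqrt{|S|\ln(n|S|)/g}\right),
$$
which is exactly the claimed additive error.

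\textbf{Main obstacle.} There is no serious obstacle here; the theorem is the routine optimization step where the parameters $\epsilon$ and $\delta$ are balanced so that the two dominant terms $\delta n$ and $\ln(n|S|)/\epsilon$ in the accuracy bound are equal, subject to the truthfulness constraint $(v_{max}+4)\epsilon\lesssim \delta g/|S|$ being tight. The only mild care needed is in checking that the constraint holds for the specified parameters and that $n^\alpha$ is indeed subdominant, which is immediate.
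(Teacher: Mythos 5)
Your proposal is correct and matches the paper's own (very terse) derivation: the paper simply states the parameter settings for $\epsilon$ and $\delta$ and invokes Claims~\ref{clm:truthtelling} and~\ref{clm:accuracy}, exactly the substitution and term-balancing you carry out explicitly. Your verification that $(v_{max}+4)\epsilon\le\delta g/|S|$ holds for $n$ large enough and that $\delta n$ and $2\ln(n|S|)/\epsilon$ are the two balanced dominant terms is exactly the intended argument.
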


The relative accuracy of our mechanisms, in the sense of the difference between the optimal value when agents are privacy-aware or not, increases with larger populations. As described before, natural distributions of the privacy valuations will be $(\alpha,1-\alpha)$-admissible even for very small values of $\alpha$, and therefore the dominating term in the expression in Theorem~\ref{thm:main} can be made arbitrarily close to $\tilde{O}(\sqrt{n})$.


\subsection{Example: Privacy-Aware Selling of Digital Goods}\label{sec:mechanism digital goods}

We describe now an example of a natural privacy-aware mechanism that naturally falls within our framework.

\begin{example}[Pricing a Digital Good]\label{example:pricing digital good}
An auctioneer selling a digital good wishes to design a single price mechanism that would (approximately) optimize her revenue. Every party has a valuation $t_i \in Q=\{0,\frac{1}{q},\frac{2}{q},\ldots,1\}$ for the good (for some constant $q$), and a privacy preference $v_i$. Agents are asked to declare $\tau_i=(t_i,v_i)$ to the mechanism, which chooses a price $p$ for the good. Denote by $\tau'_i=(t'_i,v'_i)$ the actual declaration of agent $i$. If $t'_i\geq p$ then agent $i$ receives the good and pays $p$, otherwise, agent $i$ learns $p$ but does not pay nor receive the good. Agents prefer receiving the good to not receiving it.
\begin{itemize}
\item The utility $u^\out$ is the `traditional' utility, i.e., zero if agent $i$ does not receive the good, and $t_i-p+\frac{1}{2q}$ otherwise, where the additive $\frac{1}{2q}$ is used for modeling preference to receive the good. 
\item For $u^\inf$, we assume that whether agent $i$ received the good and paid for it can be kept completely hidden from all other parties (this can be implemented using cryptographic techniques). Hence, only leakage due to making $p$ public affects $u^\inf$.
\end{itemize}
\end{example}

Consider now the auctioneer from Example~\ref{example:pricing digital good}, and assume that the valuations $t_i$ are taken from $T=\{0,\frac{1}{q},\frac{2}{q},\ldots,1\}$ and similarly, the price $p\in S=\{\frac{1}{q},\frac{2}{q},\ldots,1\}$ for some integer constant $q>1$. Let $\alpha$ be the smallest value such that the agent population can be assumed to be $(\alpha,1-\alpha)$-admissible.

Defining the reactions to be $\{\mbox{\rm \bf buy}, \mbox{\rm \bf not buy}\}$ and optimal reactions $r_i(t,p)= \mbox{\rm \bf buy}$ if $t\geq p$ and $\mbox{\rm \bf not buy}$ otherwise we get that the gap $g$
is $1/2q$.

Suppose the designer goal is to recover the optimal revenue, i.e., $\max_{p\in S} f(t,p)$ where $f(t,p) = p\cdot|\{i: t_i \geq p\}|$ and note that $\Delta f=1$. 

Using Theorem~\ref{thm:main} we get a privacy-aware mechanism that recovers the optimal revenue to within additive difference of 
$O\left(\Delta f n^{(1+\alpha)/2}\sqrt{|S|\ln(n|S|)/g}\right) = O\left(n^{(1+\alpha)/2} q \sqrt{\ln(nq)}\right)$.

Note that the accuracy of this privacy aware mechanism is only (essentially) a factor $\tilde{O}(n^{\frac{\alpha}{2}})$ away from the similar (non-privacy aware) mechanism from~\cite{NST12}. 





\bibliographystyle{alpha}
\newcommand{\etalchar}[1]{$^{#1}$}

\appendix

\section{Omitted Proofs}\label{Missing Proofs} 

\subsection{Proof of Lemma~\ref{lem:indifference}}

\begin{proof} 
Let $t,t',g$ be as in the lemma.
$$
\expectation_{s\sim M(t)} [g(s)] = \sum_{s\in S} M(t)(s) g(s) \leq \sum_{s\in S} e^{\epsilon} M(t')(s) g(s) = e^{\epsilon} \cdot \expectation_{s\sim M(t')} [g(s)],
$$
where the inequality follows since $M$ provides $\epsilon$-differential privacy, and $g$ is non-negative. For $\epsilon \leq 1$ and $g: S\rightarrow [0,1]$ we get 
$$
\expectation_{s\sim M(t)} [g(s)] - \expectation_{s\sim M(t')} [g(s)] \leq (e^{\epsilon}-1) \cdot \expectation_{s\sim M(t')} [g(s)]  \leq e^{\epsilon}-1,
$$
where the last inequality holds because $g$ returns a values in $[0,1]$. Similarly, we get
$\expectation_{s\sim M(t')} [g(s)] - \expectation_{s\sim M(t)} [g(s)] \leq e^{\epsilon}-1$, hence 
$$\left|\expectation_{s\sim M(t)} [g(s)] - \expectation_{s\sim M(t')} [g(s)]\right| \leq e^{\epsilon}-1 < 2\epsilon,$$ where the last inequality follows noting that $(e^{\epsilon}-1) \le 2\epsilon$ for $\ 0 \leq \epsilon \leq 1$.
\end{proof}

\subsection{Proof of Theorem~\ref{thm:expdp}}

\begin{proof}
Let $t,t'$ be neighboring, and $S' \subseteq S$. 
\begin{eqnarray*}
M_f^{\frac{\epsilon}{2\Delta f}}(t)(S') & = & \sum_{s\in S'} \frac{\exp(\frac{\epsilon}{2\Delta f} f(s,t))}{\sum_{s'\in S} \exp(\frac{\epsilon}{2\Delta f} f(s',t))} \\
& = & \sum_{s\in S'} \frac{\exp(\frac{\epsilon}{2\Delta f}(f(s,t) - f(s,t'))) \exp(\frac{\epsilon}{2\Delta f} f(s,t'))}{\sum_{s'\in S} \exp(\frac{\epsilon}{2\Delta f}((f(s',t) - f(s',t')))\exp(\frac{\epsilon}{2\Delta f} f(s',t')))} \\
& \leq & \sum_{s\in S'} \frac{\exp(\frac{\epsilon}{2}) \exp(\frac{\epsilon}{2\Delta f} f(s,t'))}{\sum_{s'\in S} \exp(-\frac{\epsilon}{2})\exp(\frac{\epsilon}{2\Delta f} f(s',t')))} \\
& = & \exp(\epsilon) M_f^\epsilon(t)(S'),
\end{eqnarray*}
where the inequality follows by recalling that $\Delta f \geq |f(s,t)-f(s,t')|$ for all $s,t,t'$.
\end{proof}

\end{document}